\theoremstyle{plain}
\newtheorem{theorem}{Theorem}
\newaliascnt{lemma}{theorem}
\newtheorem{lemma}[lemma]{Lemma}
\newaliascnt{corollary}{theorem}
\theoremstyle{definition}
\newaliascnt{definition}{theorem}
\newcommand\NP{$\mathsf{NP}$}
\newcommand\PSPACE{$\mathsf{PSPACE}$}
\let\epsilon=\varepsilon
\title{\MakeUppercase{Flat Foldings of Plane Graphs\penalty-1000 with Prescribed Angles and Edge Lengths}}
\author{%
Zachary Abel,\thanks{\affil{Department of Mathematics, MIT},
\email{zabel@mit.edu}}\,
Erik D. Demaine,\thanks{\affil{MIT Computer Science and Artificial Intelligence Lab},
\email{edemaine@mit.edu}}\,
Martin L. Demaine,\thanks{\affil{MIT Computer Science and Artificial Intelligence Lab},
\email{mdemaine@mit.edu}}\,
David Eppstein,\thanks{\affil{Department of Computer Science, University of California, Irvine},
\email{eppstein@uci.edu}}\,
Anna Lubiw,\thanks{\affil{David R. Cheriton School of Computer Science, University of Waterloo},
\email{alubiw@uwaterloo.ca}}\penalty-1000
and Ryuhei Uehara\thanks{\affil{School of Information Science, Japan Advanced Institute of Science and Technology},
\email{uehara@jaist.ac.jp}}%
}
\begin{document}
\maketitle

\begin{abstract}
When can a plane graph with
prescribed edge lengths and prescribed angles (from among $\{0,180^\circ,
360^\circ$\}) be folded flat to lie in an infinitesimally thin line,
without crossings?
This problem generalizes the classic theory of single-vertex flat origami
with prescribed mountain-valley assignment,
which corresponds to the case of a cycle graph.
We characterize such flat-foldable plane graphs by two obviously necessary
but also sufficient conditions, proving a conjecture made in 2001:
the angles at each vertex should sum to $360^\circ$,
and every face of the graph must itself be flat foldable.
This characterization leads to a linear-time algorithm for testing
flat foldability of plane graphs with prescribed edge lengths and angles, and a polynomial-time algorithm for counting the number of distinct folded states.
\end{abstract}

\pagestyle{plain}

\section{Introduction}

The modern field of origami mathematics began in the late 1980s with the goal
of characterizing flat-foldable crease patterns, i.e., which plane graphs form
the crease lines in a flat folding of a piece of paper~\cite{DemORo-GFA-07}.
This problem turns out to be \NP-complete in the general case,
with or without an assignment of which folds are mountains and which are
valleys~\cite{BerHay-SODA-96}.

\begin{figure}[t]
\centering
\includegraphics[width=5in]{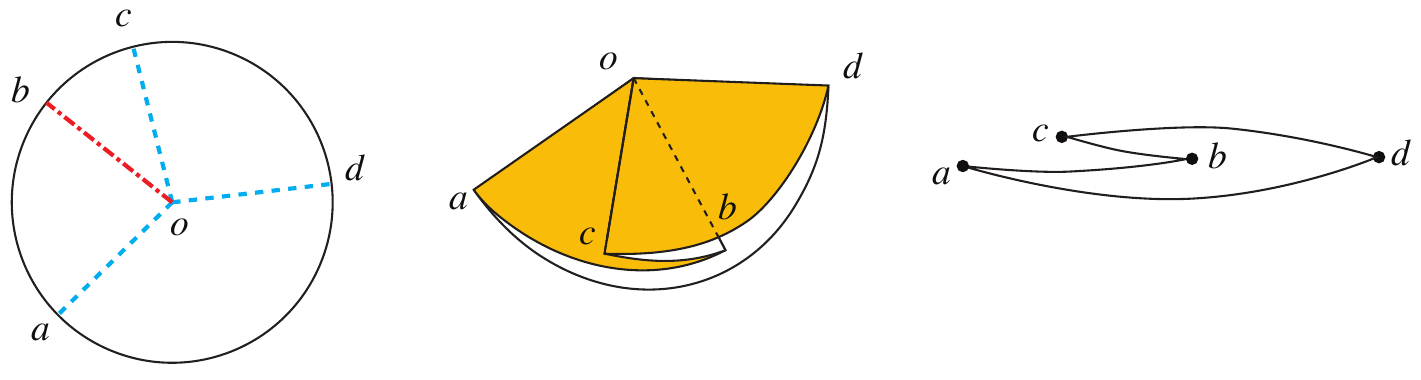}
\caption{Flat folding at a single vertex on a disc reduces to the problem of folding a polygon onto a line.}
\label{fig:one-cycle}
\end{figure}

On the other hand, flat foldability can be solved in polynomial time for crease
patterns with just a single vertex (thus characterizing the
local behavior of a vertex in a larger graph).
By slicing the paper with a small sphere centered at the single vertex
(the geometric \emph{link} of the vertex),
single-vertex flat foldability reduces to the 1D problem
of folding a polygon (closed polygonal chain) onto a line;
see \autoref{fig:one-cycle}.
This problem can be solved by a greedy algorithm that repeatedly folds both
ends of a shortest edge with opposite fold directions (mountain and
valley)---either because such directions have already been pre-assigned or,
if the mountain-valley assignment is not given, by making such an assignment
\cite{BerHay-SODA-96,Hul-O3-02,ArkBenDem-CGTA-04,DemORo-GFA-07}.
The spherical, self-touching Carpenter's Rule Theorem
\cite{ConDemRot-DCG-03,StrWhi-JCDCG-04,AbbDemGas-arXiv-09}
implies that any flat-folded single-vertex origami can be reached from the
unfolded piece of paper by a continuous motion that avoids bending or folding
the uncreased parts of the paper.

In practical applications of folding beyond origami, the object being folded
may not be a single flat sheet, but rather some 2D polyhedral cell complex
with nonmanifold topology (for instance, more than two facets joined at an edge, or a vertex with a disconnected link).
Flat foldability of such complexes is no easier than the origami case,
but again we can hope for reduced complexity when a complex has only a single vertex that should also appear as a convex vertex of the flat-folded state.
As with one-vertex origami, we can reduce the problem to 1D
by slicing with a small sphere centered at the vertex---now resulting in a general plane graph rather than a simple cycle---and
asking whether this graph can be flattened onto a
line~\cite{AbeDemDem-IJCGA-13}; see \autoref{fig:two-cycles}.
In this way, the problem of flat-folding single-vertex complexes can be
reduced to finding embeddings of a given plane graph onto a line.
The same slicing technique works more generally for flat-foldings of single-vertex complexes that extend for less than a full circle around the vertex, without requiring the vertex to be convex in the flat-folded state.

\begin{figure}[t]
\centering
\includegraphics[width=6in]{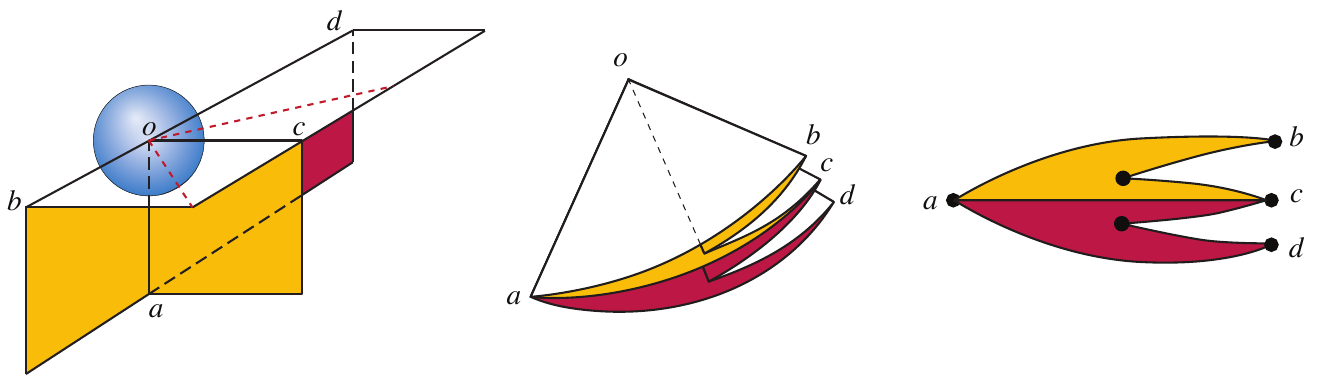}
\caption{Flat folding a two-dimensional cell complex with a single vertex reduces to the problem of folding a plane graph onto a line.}
\label{fig:two-cycles}
\end{figure}

It is this problem that we study here:
given a plane graph with specified edge lengths,
does it have a straight-line plane embedding with all vertices arbitrarily
close to a given line and with all edges arbitrarily close to their specified
lengths?
In the version of the problem we study,
we are additionally given a specification of whether the angle between
every two consecutive edges at each vertex is a \emph{mountain fold}
(the angle is arbitrarily close to $360^\circ$), a \emph{valley fold}
(the angle is arbitrarily close to $0$), or \emph{flat}
(the angle is arbitrarily close to $180^\circ$).
Without this information, the problem of testing whether a given plane graph
can be folded flat with specified edge lengths (allowing angles of $180^\circ$)
is weakly \NP-complete, even for graphs that are just simple cycles,
by a straightforward reduction from the subset sum problem.
For general plane graphs, the problem becomes strongly
\NP-complete~\cite{AbeDemDem-IJCGA-13}.
Therefore, we concentrate in this paper on the version of the problem with
given angle assignments, posed as an open problem in~\cite{AbeDemDem-IJCGA-13}.

\subsection{New Results}

We show that it is possible to test in linear time whether a given plane graph,
with given edge lengths and angle assignment, can be folded flat;
refer to \autoref{tbl:variations}.
Additionally, in polynomial time, we can count the number of
combinatorially distinct flat foldings.
Intuitively, two foldings are combinatorially equivalent when the same pairs of edges touch each other at the same positions in the folding, and combinatorially distinct otherwise.
 for a more careful definition of combinatorial distinctness, see \autoref{sec:defs}.

\begin{table}[t]
\centering
\tabcolsep=0.7em
\def\arraystretch{1.3}
\begin{tabular}{r|cc}
&~~Flat angles forbidden~~&~~Flat angles allowed~~\\
\hline
~~Angle assignment given~~&Linear time (new)&Linear time (new)\\
~~Angle assignment unspecified~~&Open&\NP-complete~\cite{AbeDemDem-IJCGA-13}\\
\end{tabular}
\medskip
\caption{Complexity of flat folding a plane graph, by input model}
\label{tbl:variations}
\end{table}

Our algorithms are based on a new characterization of flat-foldable graphs: a flat folding exists if and only if the angles at each vertex sum to $360^\circ$ and each individual face in the given graph can be folded flat. 
Even stronger, we show that
independent flat foldings of the interior of each face can always be combined into a flat folding of the whole graph. Here, again, foldings of the interior of a face are considered equivalent when the same pairs of interior sides of edges touch each other in the same way, and distinct otherwise, ignoring the contacts of exterior sides of edges of the face or between edges that are not part of the face.
\autoref{fig:in-vs-out} shows an example of this combination of face foldings.

 A form of the theorem was conjectured in 2001 by Ilya Baran, Erik Demaine, and Martin Demaine, but not proved until now; it contradicts the intuitive (but false) idea that, for faces with ambiguous spiraling shapes, each face must be folded consistently with its neighboring faces. With this theorem in hand, our algorithms for constructing and counting folded states follow by using a greedy ``crimping'' strategy for flat-folding simple cycles~\cite{BerHay-SODA-96,ArkBenDem-CGTA-04,DemORo-GFA-07} and by using dynamic programming to count cycles within each face.

\begin{figure}[t]
\centering
\includegraphics[width=0.75\textwidth]{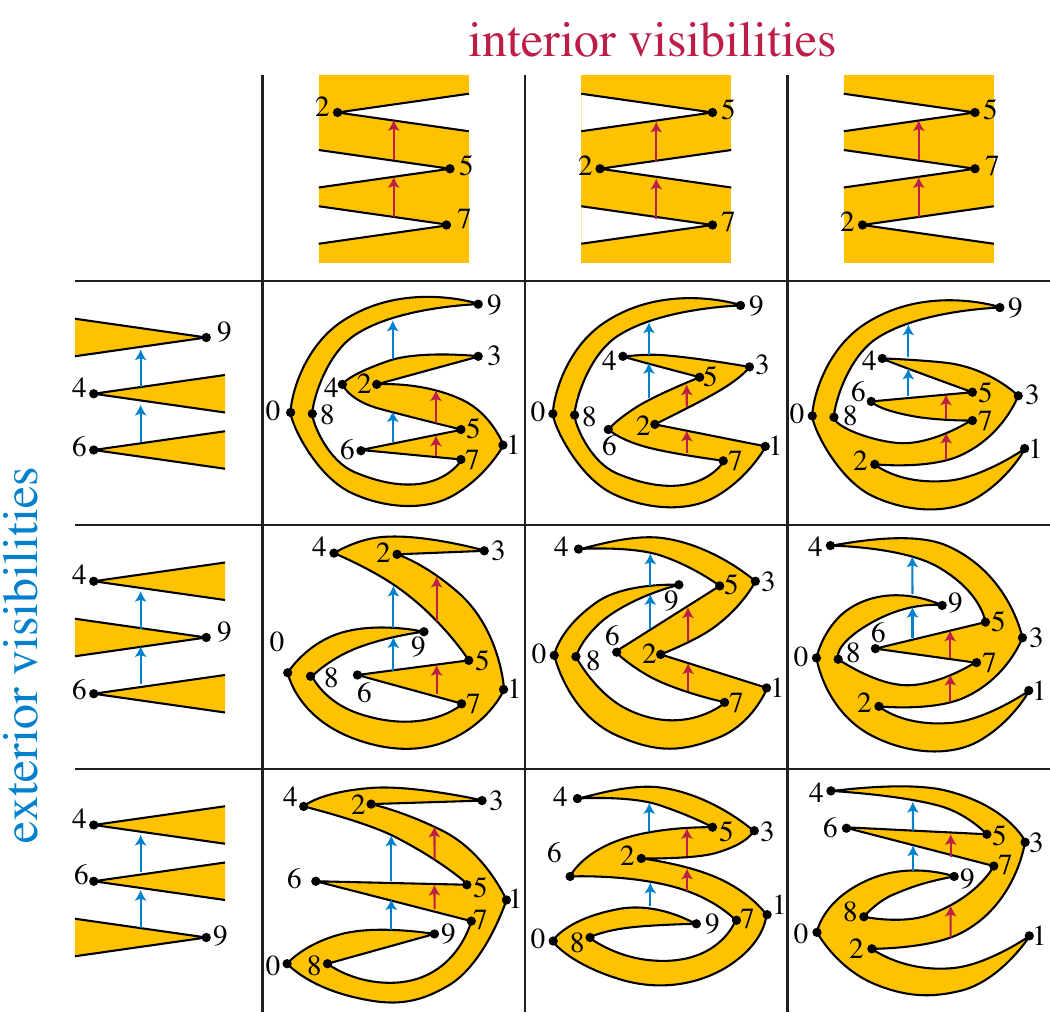}
\caption{A planar graph with two faces, each of which can be flat-folded to give three different patterns of vertical visibility (or ``touching'') within it. These patterns can be combined independently, giving nine flat-foldings of the whole graph. The nine flat-foldings are depicted with the correct above-below relation of edges, but (for legibility) not the correct left-right relation of vertices.}
\label{fig:in-vs-out}
\end{figure}

Our characterization necessarily concerns flat folded states in isolation from each other.
We do not consider the ability to reach a given state by continuous
folding motions from a given (nonflat) configuration.
As shown by past work, even for trees,
there exist locked states that cannot be continuously moved to a 
flat folded state~\cite{BieDemDem-DAM-02,BalChaDemDemIacLiuPoo-WADS-09},
and testing the existence of a continuous motion between two states is
\PSPACE-complete~\cite{AltKnaRot-TTGG-04}.

We leave open the problem of finding a flat folded state for a graph in which
the planar embedding and edge lengths are preassigned, and angles of
$180^\circ$ are forbidden, but the choice of which angles at each vertex are
$0$ and which are $360^\circ$ is left free
(bottom-left cell of \autoref{tbl:variations}).
Even for trees, this open problem appears to be nontrivial; see
\autoref{fig:impossible-tree} and~\cite{EstFowKob-CGTA-09}.

\begin{figure}[t]
\centering
\includegraphics[width=5in]{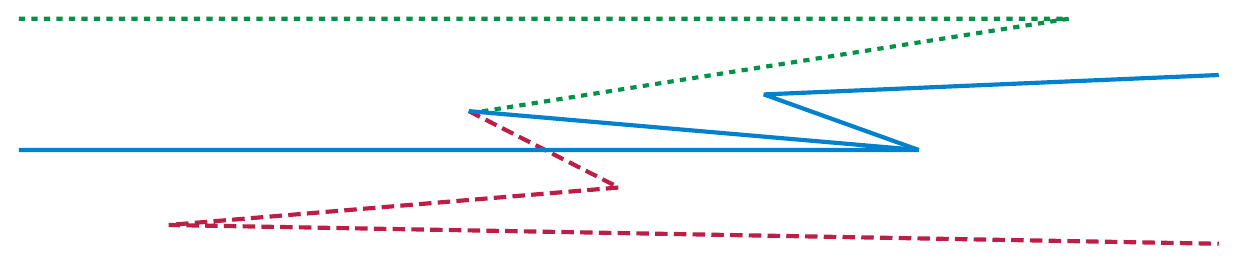}
\caption{A tree with fixed edge lengths (the Euclidean lengths of the line segments shown) that (when angles equal to $180^\circ$ are forbidden) has no 
flat folding, regardless of planar embedding or angle assignment}
\label{fig:impossible-tree}
\end{figure}

\subsection{Related Work}
\label{sec:related}

There has been intensive study of
straight-line drawings of graphs with specified edge lengths and/or specified angles between consecutive edges in a cyclic ordering of edges around each vertex.
If only edge lengths are specified then---whether the drawing must be planar or not---the problem is \NP-hard~\cite{Cabello-Demaine-Rote,Saxe}, or worse~\cite{Schaefer}.
It is also \NP-hard to draw a plane graph with specified angles~\cite{Garg}.
If both edge lengths and angles are specified then 
the drawing is uniquely determined and easy to construct, except in situations like ours 
where coincident edges give rise to ambiguities.

There are a number of results for special cases that have a similar flavor to ours, in that  the whole plane graph can be realized if and only if each face can.  
We now describe some of these special cases, most of which
arise as the prelude to finding an appropriate angle assignment.

\medskip\noindent
{\bf Upward Planarity.}
A directed acyclic graph (DAG) is \emph{upward planar}~\cite{Garg1995upward} if it has a planar drawing in which each edge is drawn as an increasing $y$-monotone curve. 
Recognizing upward planar graphs is \NP-hard~\cite{GarTam-SJC-01} but Bertolazzi et al.~\cite{Ber-Algo-94} gave a polynomial time algorithm for the special case of a plane graph whose cyclic order of edges around each vertex is prescribed. 
The main issue in their solution is to distinguish ``small'' versus ``large'' angles;
if an upward planar drawing is flattened onto a vertical line, then its small and large angles correspond to our valley and mountain folds.
The angle assignment is forced except at vertices with only incoming or only outgoing edges, where exactly one angle should be large and the rest small.
Bertolazzi et al. used network flows to determine these angles. To prove their algorithm's correctness, they showed that a graph with a given angle assignment has an upward planar drawing if and only if each face cycle has an upward planar drawing.  The condition for drawing a single face, given an angle assignment, is particularly simple: 
an acyclic orientation of a cycle has an upward planar drawing if and only if it has two more small than large angles.
Their proof also shows that embedding choices for the faces can be combined arbitrarily. 

\medskip\noindent
{\bf Level Planarity.}
Our flat folding problem differs from upward planarity in that we have assigned edge lengths as well as assigned angles.
This makes it more similar to the problem of \emph{level planarity}~\cite{DibNar-TSMC-88,JunLeiMut-GD-98,HarHea-GD-07}.
The input to this problem is a \emph{leveled} directed acyclic graph: a DAG whose vertices have been partitioned into a sequence of levels (independent sets of its vertices), with all edges directed from earlier to later levels. The goal is to find an upward planar embedding that places the vertices of each level on a horizontal line~\cite{PacTot-JGT-04}. 
This problem has a linear time solution~\cite{JunLeiMut-GD-98} based on PQ-trees.  When the cyclic order of edges around vertices is specified (and in fact for more general constraints) there is a quadratic time solution based on solving systems of binary equations~\cite{HarHea-GD-07}.  


The input to our folding problem may be interpreted as a leveled plane DAG.  (Since our convention is to flatten to a horizontal line, we will map to a level planarity problem with levels progressing rightward rather than upward---this is a superficial difference.)  
Arbitrarily choose an $x$-coordinate for one vertex in each connected component of the graph and a direction (left-to-right) for one edge incident to that vertex.  These choices can be propagated to all the vertices and edges using the specified edge lengths and angles.  The set of vertices at a given $x$-coordinate constitute a level, giving us an input to the level planarity problem with a prescribed plane embedding. 
However, the embeddings we seek in the folding problem are not the same as level planar embeddings. In a level planar embedding, vertices within a single level must be linearly ordered by the second coordinate value. In contrast, in the folding problem a vertex that has only incoming or only outgoing edges may be nested between two adjacent edges of another vertex at the same level. A four-vertex cycle, oriented with alternating edge directions, illustrates the difference between these two types of embedding: it is not level planar, but it still has a flat folding with three mountain folds and one valley fold, corresponding to the usual way of folding a square sheet of paper into quarters (\autoref{fig:quarters}).

\begin{figure}[t]
\centering
\includegraphics[width=4.5in]{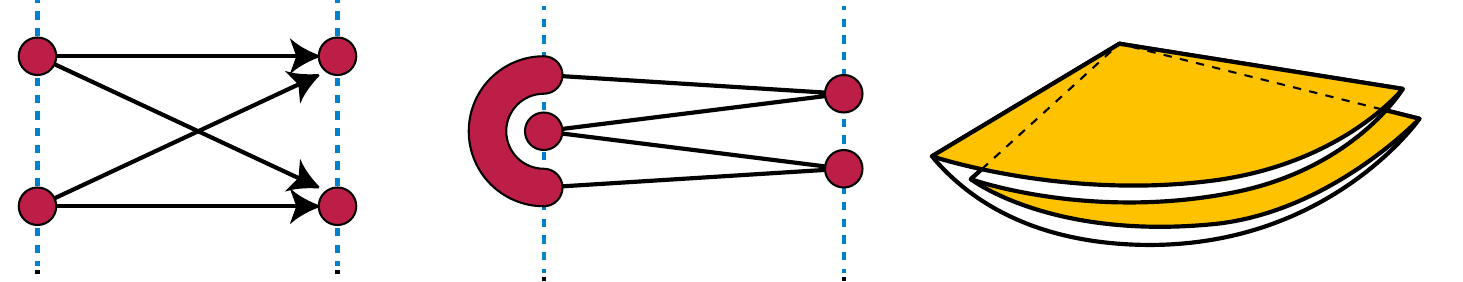}
\caption{A four-vertex cycle with vertices alternating between two levels is not level planar, but can be folded flat, representing a sheet of paper folded into quarters.}
\label{fig:quarters}
\end{figure}

We have therefore been unable to apply level planarity algorithms to solve our flattening problem.
On the other hand, our algorithm can be used to test level planarity of a plane graph (with a linear order of incoming and outgoing edges at each vertex) in linear time, faster than the practical quadratic algorithm of Harrigan and Healy~\cite{HarHea-GD-07} and matching the known linear time algorithm of J\"unger, Leipert, and Mutzel~\cite{JunLeiMut-GD-98}.  Given an input to the level planarity problem, we assign increasing coordinates to the levels, and assign the length of an edge to be the difference in coordinates between the levels of its endpoints.  Mountain/valley/flat angles are determined from the level assignment.  Finally, to preclude the nesting of vertices that is allowed in flattening but not in level planarity, we add an extra edge incident to each vertex that has only incoming or only outgoing edges:  if vertex $v$ on level $i$ has only outgoing edges, we add a new incoming edge from a new level just before $i$.   The resulting plane graph has a flat-folding respecting the angles and edge-lengths if and only if the original has a level planar drawing. 
From this we also obtain the result that a leveled plane graph has a level planar drawing if and only if each cycle does.

\medskip\noindent
{\bf Rectilinear Planarity.}
In our flat folding problem, the angles are multiples of $180^\circ$.  
When angles are multiples of $90^\circ$, we arrive at the important problems of orthogonal and rectilinear graph drawing~\cite{Duncan}.
A graph is \emph{rectilinear planar} if it can be drawn in the plane so that every edge is a horizontal or vertical line segment.  Coincident edges are forbidden, so the graph must have maximum degree 4.
This problem is \NP-complete in general~\cite{GarTam-SJC-01} but---as with upward planarity---there is a polynomial time algorithm, due to Tamassia~\cite{Tam-SJC-87}, if the cyclic order of edges around vertices is prescribed.  
Again, the main issue is to find an assignment of angles or, equivalently, a labeling of the edges incident to a vertex with distinct labels from the set ${U,D,L,R}$, where $U$ stands for ``Up'', etc.
Tamassia finds the angles using network flows  (in fact, he solves a more general problem of minimizing the number of bends in the drawing). 
At the heart of this method is the result that, 
given an angle assignment that is locally consistent (i.e., the angles at every vertex sum to $360^\circ$), the graph has a rectilinear planar drawing if and only if  each face cycle does~\cite{Tam-SJC-87}.
As in the other cases we have discussed, the proof shows the stronger result that embedding choices for the faces can be combined arbitrarily.  A cycle with an angle assignment has a rectilinear planar drawing if and only if the number of right turns minus the number of left turns is 4 in a clockwise traversal inside the cycle~\cite{Tam-SJC-87,Vijayan1985}.

Our result on flat folding can be used to prove an extension of the above result to rectilinear graph drawings with angles specified and with 
lengths assigned to the ``horizontal'' edges.  (Note that the angle information allows us to distinguish the two classes of edges, although it is arbitrary which class is horizontal and which is vertical.)
Given a rectilinear plane graph, contract all vertical edges, assigning angles  of $0, 180^\circ, 360^\circ$ in the obvious way.  Finally, in order to avoid ``nested'' vertices at the same coordinate (as in \autoref{fig:quarters}), we use the same trick of adding an extra edge incident to each vertex that has only incoming or only outgoing edges.
We claim that the resulting multi-graph has a flat-folding if and only if the original has a rectilinear planar drawing with horizontal edges of the specified lengths.  For the non-trivial direction, suppose we have a flat folding of the constructed graph.  We must expand each vertex to a vertical line segment with the horizontal edges touching the line segment in a way that is consistent with the original graph.  This can easily be done in a left to right sweep.

From this reduction we obtain the following result. For the purposes of this theorem, a \emph{face cycle} means a cyclic sequence alternating between vertices and edges, allowing both repeated vertices and repeated edges, obtained by tracing the boundary of a face in an embedding. Later in \autoref{sec:euler} we will define an equivalent notion in which the faces are represented by simple cycles (not allowing repetitions).

\begin{theorem}
If $G$ is a plane graph of maximum degree 4 with
specified angles that are multiples of $90^\circ$ and sum to $360^\circ$ at each vertex, and $G$ has 
lengths assigned to its horizontal edges, then $G$ has a rectilinear planar drawing realizing these angles and edge lengths if and only if every face cycle has a rectilinear planar drawing realizing the angles and lengths.  
\end{theorem}

\section{Definitions}
\label{sec:defs}

We assume that the input is a plane graph (a graph already equipped with a planar embedding),
with a length assignment on each edge and an angle assignment at each vertex.
We also assume that these length and angle assignments are consistent: the angles around each vertex should add to $360^\circ$, and following two different paths from any one vertex to any other should give the same signed sum of lengths.
The embedding specifies the clockwise ordering of edges around each face cycle and around each vertex of the graph, and also specifies which face cycle is outermost in the embedding; the folding should respect this embedding.

\subsection{Self-touching configurations}

Following previous work in this area~\cite{ConDemRot-PK-02,Rib-PhD-06} we formalize the notion of a flat folding using \emph{self-touching configurations}. Intuitively, these are planar embeddings in which edges and vertices are allowed to be infinitesimally close to each other.
A one-dimensional self-touching configuration of a graph $G$ consists of a mapping from $G$ to a path graph $H$ that maps vertices of $G$ to vertices of $H$ and edges of $G$ to paths in $H$, together with a \emph{magnified view} of each vertex and edge of $H$ that describes the local connectivity of the image of $G$ at that point. (The previous papers~\cite{ConDemRot-PK-02,Rib-PhD-06} also defined two-dimensional self-touching configurations, using mappings to a graph $H$ that is not a path graph, but we will not use these.)

More explicitly, given any path graph $H$, form a linear layout of $H$ in which the vertices are replaced by disjoint disks along a horizontal line, and the edges are replaced by finite-width rectangular corridors between these disks, as shown by the blue disks and corridors in the upper right of \autoref{fig:selftouch}. Then an \emph{expanded drawing} for a planar graph $G$ on $H$ is a planar drawing of $G$ such that:
\begin{itemize}
\item Each vertex of $G$ is placed within one of the disks of $H$.
\item Each disk of $H$ contains at least one vertex of $G$.
\item Each edge of $G$ is represented by a curve that stays within the disks and corridors of $H$, 
traverses
at least one corridor, and is monotone within each of the corridors that it traverses.
\end{itemize}
By monotonicity, each edge can traverse
any given corridor at most once, and cannot turn back on itself within any corridor.
Because the edges are required to be monotone within each corridor, they may be linearly ordered by their crossings with any vertical line.

We define a one-dimensional self-touching configuration to be a combinatorial abstraction of an expanded drawing,
consisting of a mapping from vertices of $G$ to vertices of $H$ and edges of $G$ to paths of one or more edges of $H$,
together with the vertical ordering of the edges of $G$ along each corridor (an edge of $H$).
We call this vertical ordering the \emph{magnified view} of the edge of $H$.
Correspondingly, we call the part of the drawing of $G$ within each of the vertex disks of $H$
the magnified view of that vertex, although these will be less important for us than the magnified views of edges.

In a self-touching configuration, the \emph{multiplicity} of an edge in $H$ is a non-negative integer, the number of different edges of $G$ that map to it.
If $G$ is connected, every edge of $H$ will necessarily have positive multiplicity, but we also allow zero multiplicity when $G$ is disconnected.

\begin{figure}[t]
\centering\includegraphics[width=\textwidth]{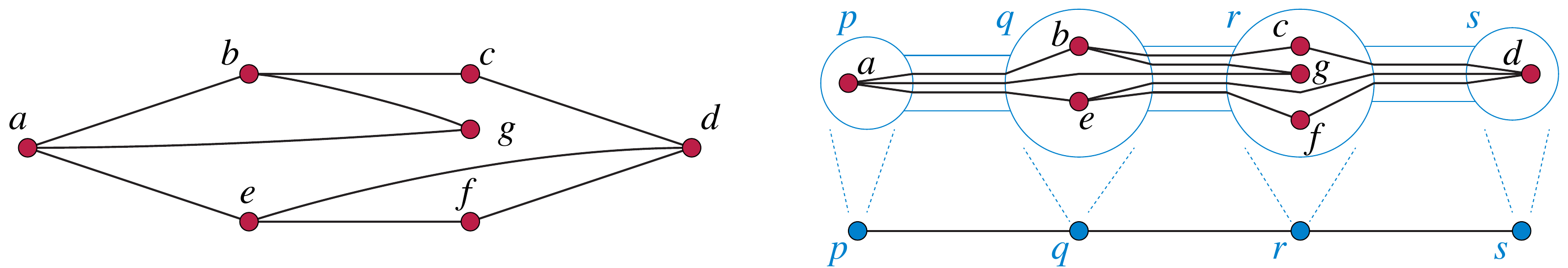}
\caption{A flat-folding of a seven-vertex graph $G$ (left), described as a self-touching configuration in which $G$ is mapped onto a four-vertex path $H$ (right), shown with magnified views of its edges and vertices of $H$.}
\label{fig:selftouch}
\end{figure}

\subsection{Flat foldings}

We may now define a \emph{flat folding} of a graph $G$ with assigned edge lengths
to be a self-touching configuration in which, in addition,
the edges of the path graph $H$ have assigned lengths,
and the length of each edge in $G$ equals the length of the corresponding path in $H$.
A \emph{face} of a flat folding or self-touching configuration is a cycle formed by a face of the expanded drawing. The \emph{angle} formed by a pair of incident edges in a flat folding is one of three values, $0$, $180^\circ$, or $360^\circ$, accordingly as the two edges extend in the same direction from their common endpoint with the face lying between them, the two edges extend in opposite directions, or the two edges extend in the same direction with the face on both sides. 
An \emph{angle assignment} to a plane graph is an assignment of the values $0$, $180^\circ$, or $360^\circ$ to each of its angles, regardless of whether this assignment is
compatible 
with a flat-folding of the graph.
An angle assignment is \emph{consistent} if the angles sum to $360^\circ$ at every vertex.

We define a \emph{touching pair} of edges in a self-touching configuration of a graph $G$ to be two edges $e$ and $f$ of $G$ such that these two edges are consecutive in the vertical ordering in the magnified view of at least one edge in $H$. Each touching pair can be assigned to a single face of $G$, the face that lies between the two edges.

We may define a flat folding of a face $F$ of the given graph $G$ to be simply a flat folding of $F$ viewed as a subgraph of $G$. However, when we count flat foldings, we will use a variation of this concept that gives a different number of foldings. In each flat folding of $F$, viewed as a subgraph,
the touching pairs can be partitioned into two subsets: the ones assigned to the interior of $F$ and the ones assigned to the exterior. We consider two flat foldings of (the interior of) face $F$ to be equivalent when they have the same touching pairs assigned to their interior, and distinct otherwise,
ignoring any differences in their exterior touching pairs.

\subsection{Where is the line?}
Note that, although it is intuitively convenient to think of flat foldings as mapping the input graph $G$ to a horizontal line, the line itself does not appear in our definition of a flat folding. It is instead replaced by the path graph $H$. A suitable choice of the graph $H$, together with a mapping from $H$ to a horizontal line may be determined easily for any connected graph $G$, by the following process:
\begin{itemize}
\item Map one of the vertices of $G$ arbitrarily to a point on the horizontal line.
\item Choose an arbitrary direction (left or right) for one of the edges of $G$.
\item Walk from edge to edge along $G$ using the lengths of the edges to determine where along the line to place each vertex of $G$ with respect to the previous vertex in the walk, and using the given turning angles in $G$ to determine for each edge in the walk whether to walk left or right from the previous vertex.
\item Let $H$ be a path graph that connects the resulting points of the line as vertices, in their sorted order along the line.
\end{itemize}
The resulting graph $H$ has infinitely many metric embeddings onto the line (given by the infinitely many choices for the placement of the first vertex of $G$ and the two choices for the direction of the first edge in the walk). However, because flat foldings are defined by self-touching configurations that map $G$ to $H$ rather than by mappings directly to the line, our arbitrary choice of embedding of $H$ onto the line does not affect the existence or number of flat-foldings.

In particular, it is meaningless (with these definitions) to ask whether a flat folding is the same when we reverse it by a reflection of the line: the reflection changes the arbitrary mapping from $H$ to the line, but this mapping is not part of the flat folding. It is also meaningless to ask whether a flat folding is the same when we reverse it by a reflection across the line (reversing its vertical ordering while preserving its horizontal placement); such a reversal is not allowed, as it would reverse the clockwise ordering given in the input embedding of $G$.

\section{Local Characterization}
\label{sec:local-characterization}

In this section we show that for a plane graph with assigned lengths and consistent angles, being able to fold the whole graph flat is equivalent to being able to fold each of its faces flat.

\begin{theorem}
\label{thm:local}
Let $G$ be a plane graph 
with given edge lengths and a consistent angle assignment.
Then $G$ has a flat folding if and only if every face cycle of $G$ (with the induced assignment of lengths and angles) has a flat folding. More strongly, for every combination of flat foldings of the faces of $G$, there exists a flat folding of $G$ itself whose touching pairs for each face are exactly the ones given in the folding of that face.
\end{theorem}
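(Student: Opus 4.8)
The forward direction is immediate: restricting any flat folding of $G$ to a single face cycle, and keeping only the touching pairs assigned to that face, yields a flat folding of that cycle with the induced lengths and angles. All the work lies in the converse, and since the stronger ``independent combination'' statement subsumes it, the plan is to prove that directly. First I would pin down the geometry by showing that the position of every vertex along the folding line is forced, up to a global translation and reflection, by the edge lengths and the angle assignment. Orient the line and assign each edge-end at a vertex a direction in $\{+1,-1\}$ (rightward or leftward). The consistency hypothesis is exactly what makes these directions well defined locally: if a vertex has $a$ angles equal to $0$, $b$ equal to $180^\circ$, and $c$ equal to $360^\circ$, then $180b + 360c = 360$ forces $b + 2c = 2$, so $b \in \{0,2\}$ is even; since only the $180^\circ$ angles reverse direction, traversing the edges around the vertex returns to the starting direction. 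Propagating lengths and directions then assigns an $x$-coordinate $x(v)$ to each vertex, provided the net signed displacement around every cycle is zero. This global consistency is supplied by the hypothesis: each face has a flat folding, which returns to its starting point and hence has zero net displacement, and the face cycles generate the cycle space of $G$. I thus obtain a well-defined map $x\colon V(G)\to\mathbf{R}$ shared by $G$ and by all the given face foldings at once.

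\textbf{Reduction to a layering problem.} With the $x$-coordinates fixed, I would argue that a flat folding is nothing more than a choice of transverse stacking order of the edges over each point of the line (the magnified views of the edges of $H$), subject to being non-crossing and to the forced behaviour at vertices. The magnified view at each vertex is \emph{forced} by the combinatorial planar embedding together with the edge directions: reading the rotation at $v$, one alternates edges and face-corners, and the direction of each edge together with the cyclic order dictates the local arrangement, including the immediate vertical order of equidirected edges at $v$. Consequently the only remaining freedom is the stacking over the open slabs between consecutive vertex $x$-coordinates, and the data a face folding contributes is exactly its set of touching pairs, i.e.\ the transverse adjacencies separated only by that face's sliver. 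Because the vertex views are forced from the same embedding and the common $x$-coordinates, the views demanded by different face foldings automatically agree at every shared vertex, so there is no incompatibility to resolve there.

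\textbf{Gluing the faces (the crux).} The plan is to assemble the combined folding by processing the faces in the order of a spanning tree of the dual graph of $G$, maintaining a non-crossing flat folding of the union of the faces seen so far and inserting each new face across the edge(s) it shares with the already-placed region. The $x$-coordinates guarantee the interfaces line up, and the forced vertex views guarantee compatibility at vertices; the entire difficulty concentrates in showing that inserting the new face's layers never forces a crossing and never creates a spurious touching pair that would alter another face's prescribed folding. This is precisely the point flagged in the introduction: the naive fear is that a face with an ambiguous spiralling shape must be stacked consistently with its neighbours, and the theorem asserts it need not. I expect this to be the main obstacle, and I would attack it by showing that each face constrains only adjacencies \emph{within its own sliver}, so that along any shared edge the two incident faces make demands on opposite transverse sides of that edge and can always be interleaved. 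Concretely, I would insert the new face as a contiguous block of layers hugging the shared edge on the correct side, and verify that the bracketing structure of overlaps along the line remains properly nested after the insertion. Establishing this interleaving carefully, and checking that it preserves exactly the touching pairs of every face while introducing none between faces, is the heart of the argument.
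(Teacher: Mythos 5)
Your setup is sound and matches the paper's: the forward direction is the same easy restriction argument, and your derivation of forced $x$-coordinates from consistency at vertices plus zero net displacement around face cycles is a careful version of what the paper simply asserts via propagation. But the proof stops exactly where the theorem begins. The entire content of the statement is the ``gluing'' step, and your treatment of it consists of announcements of intent (``I would attack it by showing\dots'', ``verify that the bracketing structure\dots remains properly nested'') rather than an argument. Worse, the specific plan sketched---process faces along a spanning tree of the dual and insert each new face as a contiguous block of layers hugging one shared edge---runs into concrete obstructions you never confront. When a face $f$ is added, the dual tree gives only one attaching edge, but $f$ may share several boundary edges with already-placed faces (non-tree dual adjacencies), and those edges already occupy positions in the partial layer order; $f$'s prescribed folding demands that certain pairs of them be \emph{immediately} adjacent across $f$'s sliver, which the earlier insertions may have made impossible unless you maintain a nontrivial invariant about which gaps stay open. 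Relatedly, a spiraling face's interior folding cannot in general be realized as a contiguous block next to a single edge, and your claim that the vertex magnified views are ``forced'' is only true for the rotation of a single vertex of $G$; the nesting of distinct vertices and through-edges mapped to the same point of the line is precisely part of the freedom you must resolve.

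The paper's proof supplies the mechanism your sketch lacks, and it is structurally different: no incremental face-by-face assembly at all. After subdividing every edge at all vertex $x$-coordinates and at ``half'' coordinates, it inducts on the total number of mountain folds ($360^\circ$ angles). Given a mountain vertex $v$ of an interior face $f$, it uses $f$'s \emph{given} folding to select the edge $e$ of $f$ at $v$ that is uppermost in the relevant magnified view and the edge $e'$ immediately above it, and then identifies $v$ with the endpoint $v'$ of $e'$ at the same coordinate. This single vertex identification bakes one layer-adjacency choice of $f$'s folding into the graph itself, reduces the mountain count by one, and splits $f$ (together with its folding) into two faces with induced foldings, so the inductive hypothesis carries all faces' choices simultaneously; the subdivision at half-coordinates is what guarantees the identification never creates a multigraph. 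The base case---two mountain folds, both on the outer face---is an $st$-planar graph whose folding is unique, and un-merging the identified vertices at the end recovers a folding of $G$ with exactly the prescribed touching pairs. Your ``opposite transverse sides of a shared edge'' intuition is indeed why the theorem is true, but turning it into a proof requires something like this crimping induction (or a genuinely worked-out interleaving invariant), and as written your proposal has a gap precisely at the theorem's heart.
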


\begin{proof}
One direction is straightforward: if $G$ has a flat folding, then restricting to the faces of $G$ gives flat foldings of the faces with the same touching pairs.

For the other direction, assume we have flat foldings of the faces of $G$. We will show that $G$ has a flat folding with the same touching pairs.
We may assume without loss of generality that $G$ is connected, for otherwise the result follows by finding a flat folding separately for each connected component of~$G$.
As described in \autoref{sec:related},
the assignment of lengths and angles given with $G$ (together with an arbitrary choice of an $x$-coordinate for one vertex and an orientation for one edge) gives us a unique assignment of $x$-coordinates for the vertices of $G$ in any possible flat folding.
We will start by subdividing all the edges of $G$.  
Take the set of $x$-coordinates of vertices of $G$ and add an extra ``half'' $x$-coordinate at the midpoint between any two consecutive coordinate values.  
Subdivide each edge of $G$ by adding vertices at all the $x$-coordinates in this set. 
The same subdivisions can be made in any flat folding of $G$, so there is no change to the existence or number of flat foldings.
The subdivision does change the set of touching pairs, but two edges of the original graph form a touching pair if and only if two of the edges in the paths they are subdivided into form a touching pair, so the correctness of the part of the theorem about touching pairs carries over.

With $G$ subdivided in this way, we carry out the proof by induction on the number of face angles that are assigned to be $360^\circ$ (\emph{mountain folds}). The base case of the induction is the case in which $G$ has only two such angles, on the outer face.
In this case every face cycle consists of two paths of increasing $x$-coordinates and has a unique flat folding, and it is easy to see that $G$ has a flat folding with the same touching pairs.
(Equivalently, the graph in this case is a directed $st$-plane graph so
it is upward planar with each face drawn as two upward paths.)

If $G$ contains a vertex $v$, and an interior face $f$ in which $v$ is a mountain fold, then let $e$ be one of the two edges of $f$ incident to $v$, the one that is uppermost in the magnified view of the flat folding edge corresponding to these two edges, and let $e'$ be the edge immediately above that one. Edge $e'$ must exist, because if $e$ were the topmost edge in this magnified view, then $f$ would necessarily be the exterior face.   
(For example, in \autoref{fig:selftouch}, vertex $g$ is a mountain fold in a cycle; edge $bg$ is the uppermost edge incident to $g$; and $bc$ is the edge immediately above it.)
Let $v'$ be the endpoint of $e'$ whose $x$-coordinate is the same as $v$. We form a graph $G'$ by identifying $v$ with $v'$, ordering the edges of the combined super-vertex so that $e'$ and $e$ remain consecutive. 
This produces a graph, not a multigraph, 
because the other endpoints of $e$ and $e'$ are subdivision points at a ``half'' $x$-coordinate, and so cannot coincide with each other.
(In the example, we would identify vertices $g$ and $c$; the figure does not show the extra subdivision points.)
This 
vertex identification
reduces the number of mountain folds by one compared with $G$, and splits $f$ into two simpler faces $f_1$ and $f_2$. The same split operation can be done to the flat folding of $f$, giving flat foldings of $f_1$ and $f_2$. Thus, $G'$ meets the conditions of the theorem and has fewer mountain folds; by induction it has a flat folding realizing all the touching pairs of its face foldings, which are the same as the touching pairs of the face foldings of $G$. In this flat folding, the supervertex of $G'$ formed from $v$ and $v'$ can be split back into the two separate vertices $v$ and $v'$, giving the desired flat folding of~$G$.

The case when there exist three or more mountain folds on the exterior face is similar, but we must be more careful in our choice of $v$. Each mountain folded vertex on the exterior face is either a local minimum or local maximum of $x$-coordinates; because there are three or more of them, we may choose $v$ to be a vertex that is not a unique global extremum. Then, as above, we find a vertex $v'$ with the same coordinate, above or below $v$, and merge $v$ and $v'$ into a single vertex, giving a graph $G'$ with fewer mountain folds in which the outer face has been split into two faces, one outer and one inner. As before, these two faces inherit a flat folded state from the given flat folding of the outer face of $G$, so by induction $G'$ has  a flat folding. And as before, $v$ and $v'$ may be split back into separate vertices in this flat folding, giving the desired flat folding of~$G$.
\end{proof}

\section{Euler Tours for Nonsimple Faces}
\label{sec:euler}

In the previous section we showed that flat foldability of the whole graph depends only on flat foldability of the faces.  
For the purpose of algorithms, in order to allow the algorithm to name each position of a cycle without the complications of repeated vertices, it is convenient to
 further reduce to faces that are simple cycles.  In particular, 
our algorithms for finding and counting flat foldings of graphs rely on subroutines for solving the same problem on simple cycles.
However, if a plane graph is not $2$-vertex-connected, its faces may not actually be simple cycles: if an articulation vertex or bridge edge appears on the boundary of a face, it may appear multiple times on that face. As an extreme instance of this phenomenon, we may have a tree as our input graph; in this case, it has a single face, with every edge repeated twice and every vertex repeated a number of times equal to its degree.

In this case, we may obtain a collection of simple cycles that represent the faces of the graph as follows. Because $G$ is a plane graph (that is, a topologically but not geometrically embedded planar graph) each side of each edge has a well-defined face incident with it. (This may possibly be the same face as the other side of the same edge, but in such cases we consider the two incidences as distinct from each other.) Each face of $G$ can be described combinatorially as a cyclic (clockwise) ordering of sides of edges. For each face $f$ of $G$, we form a simple cycle of new edges and vertices, having the same cyclic sequence of edge lengths and angles as the corresponding sides of edges and vertices in $f$. The collection of simple cycles formed in this way, for all of the faces, includes two copies of each edge of~$G$, one for each of its two sides.

We call this the \emph{Euler tour technique} by analogy to the Euler tour technique for parallel graph algorithms~\cite{TarVis-FOCS-84}, in which one transforms a tree into a cycle by making two copies of each edge and following an Euler tour of the resulting multigraph. This tour traces around the exterior of a plane embedding of the tree. Here, we similarly make two copies of each edge and multiple copies of each vertex, but we partition those copies into multiple cycles, one for each face of~$G$.

\begin{figure}[t]
\centering\includegraphics[width=4in]{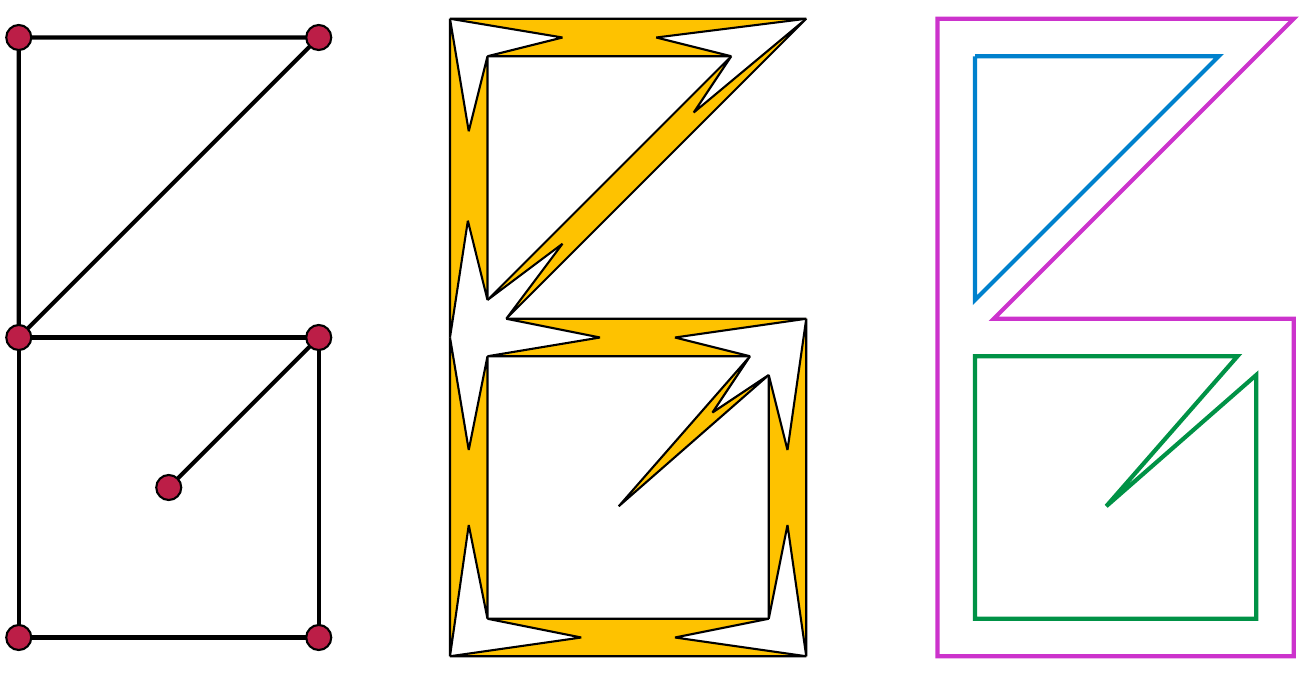}
\caption{Left: a graph $G$ that is not $2$-vertex-connected, and has repeated vertices and edges on its three faces. Center: replacing each dangling edge by a 4-cycle, and each other edge by a 6-cycle, produces a 2-vertex-connected graph $H$ with an equivalent set of flat foldings. Right: The Euler tours of the faces of $G$ are the simple cycles forming the corresponding faces of~$H$.}
\label{fig:eulerize}
\end{figure}

\begin{lemma}
\label{lem:eulerize}
Each face of the given graph $G$ is flat-foldable if and only if each of the simple cycles resulting from the Euler tour technique described above is flat-foldable. The number of flat-foldings of a connected graph $G$ is equal to the product, over these simple cycles, of the number of different patterns of internal visibility that can be formed from a flat-folding of the cycle.
\end{lemma}

\begin{proof}
Given the graph $G$, we form a new graph $H$ from $G$, by replacing each edge of $G$ by a cycle, as shown in \autoref{fig:eulerize}. In more detail:
\begin{itemize}
\item A dangling edge~$e$, i.e., an edge having one endpoint of degree one, becomes a concave four-cycle, with one vertex at the position of that degree-one endpoint adjacent to two vertices at the other endpoint, and with one concave vertex inside the triangle formed by these other three edges. We assign the two edges connecting to the degree-one endpoint the same lengths as the original length of $e$, and the other two edges are assigned a shorter length $\epsilon$. Within this four-cycle, the angles are assigned as a valley fold~($0$) at the three convex vertices and a mountain fold~($360^\circ$) at the concave vertex.
\item An edge $e$ that is not dangling is replaced by a non-convex hexagon; two opposite edges of this hexagon, of length equal to $e$, connect a pair of new vertices at each endpoint of $e$, and each of these pairs is connected by a path of two length-$\epsilon$ edges through a new vertex between the two long edges of the hexagon. Within the face formed by this new hexagon, we assign the angles as a valley fold at the four convex vertices and a mountain fold at the concave vertices.
\item At each vertex $v$ of $G$ of degree $d>1$, identify pairs of vertices from the edge polygons to form a face 
of $2d$ vertices, alternating between vertices placed at the original location of $v$ and vertices that (in the folded state) should be $\epsilon$ away from that location. Label the angles of this face with valley folds at the vertices $\epsilon$ distance away from $v$. Each remaining vertex of this face is formed by identifying a pair of vertices from the polygons of two edges $e$ and $f$. Let $\theta$ be the angle formed at $v$ by edges $e$ and $f$, and within the face for $v$ label the corresponding angle by $360^\circ-\theta$.
\end{itemize}
This system of labels ensures that, at all vertices of the expanded graph, the angle labels correctly add to~$360^\circ$.
Each new $4$-cycle or $6$-cycle in the expanded graph (corresponding to an edge of the original graph) is flat-foldable in a unique way, so the expansion of the edges of $G$ into cycles does not affect the flat-foldability or the number of flat foldings of the resulting graph.
Similarly, each of the $2d$-cycles at a vertex of degree $d$ is uniquely foldable and affects neither flat-foldability nor the number of foldings.
Each face of the resulting graph is either an Euler tour of a face of $G$, one of the new $4$-cycles (from dangling edges of $G$) or $6$-cycles (non-dangling edges), or one of the cycles from vertices of $G$. 
The result follows by applying \autoref{thm:local} to the expanded graph.
\end{proof}

The assumption of connectedness in the lemma is necessary in order for the number of flat foldings to be finite (when nonzero), because for a disconnected graph sliding one of the flat-folded components left or right with respect to the others would result in infinitely many flat-foldings.

\section{Algorithm to Find a Folding}
\label{sec:cycle-folding}

For completeness, we briefly describe a greedy ``crimping'' strategy for finding flat-folded states of simple cycles with pre-assigned edge lengths and fold angles. 
Bern and Hayes~\cite{BerHay-SODA-96} used a similar strategy to flat-fold cycles without pre-assigned angles. Arkin et al.~\cite{ArkBenDem-CGTA-04} applied this method to open polygonal chains with assigned angles. The version here for cycles with assigned angles is described by Demaine and O'Rourke~\cite[Theorem 12.1.6]{DemORo-GFA-07}.
We do not describe its (non-trivial) correctness proof.

First, remove any flat folds from the input by merging the edges on either side of the fold.
Then, repeatedly find an edge $e$ such that the two edges on either side of $e$ are at least as long as $e$, with 
an angle of $0^\circ$ at one end of $e$ and an angle of $360^\circ$ at the other end.
If no such edge $e$ exists, the cycle has no folding. If an edge $e$ that meets these conditions can be found, it is safe to perform both folds, merging $e$ with its two neighboring edges into a single edge of a simpler polygon.

Maintaining a set of edges that are ready to be folded, and performing each fold, takes constant time per fold, so folding a cycle in this way may be done in linear time.
Putting the characterization from \autoref{sec:local-characterization} together with the algorithm for folding a single cycle described above gives us an algorithm for testing whether a given plane graph $G$ with edge length and angle assignment is flat foldable:

\begin{theorem}
\label{thm:decision}
We can test flat foldability of a plane graph with given edge lengths and given angle assignment in linear time.
\end{theorem}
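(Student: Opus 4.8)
The plan is to assemble the algorithm directly from the two tools already established: the local characterization of Theorem~\ref{thm:local} and the linear-time single-cycle crimping procedure of Section~\ref{sec:cycle-folding}. Theorem~\ref{thm:local} reduces the global question of flat foldability to two purely local conditions --- that the angle assignment is consistent, and that each face cycle is individually flat foldable --- so the decision procedure need only verify these two conditions and then report success precisely when both hold.

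First I would check consistency: iterate over the vertices of $G$ and, at each one, sum the assigned values of its incident angles, verifying that the total is $360^\circ$. Since a plane graph on $n$ vertices has $O(n)$ edges and hence $O(n)$ angles, this pass takes $O(n)$ time. If any vertex fails this test, consistency is violated; as consistency is necessary for any flat folding (the angles around each vertex of a folded state must sum to a full turn), we may immediately report that $G$ is not flat foldable.

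Second, assuming consistency holds, I would extract each face cycle from the given planar embedding and run the crimping algorithm of Section~\ref{sec:cycle-folding} on it: first merging out every $180^\circ$ (flat) fold, then repeatedly crimping an edge whose two neighbors are at least as long and carry opposite fold directions. That cycle is flat foldable exactly when the procedure succeeds. By Theorem~\ref{thm:local}, and given that we have already confirmed the angle assignment is consistent, $G$ admits a flat folding if and only if every face cycle does, so the combined test is a correct decision procedure.

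The one point requiring care is the running-time bound. The crimping algorithm runs in time linear in the length of the cycle it processes, so the total cost of the second phase is proportional to the sum of the lengths of all face cycles. Because each edge contributes to exactly two face cycles (counted with multiplicity), this sum equals $2|E| = O(n)$, making the whole phase linear. I expect the only genuine obstacle to be the bookkeeping needed to realize each crimp in constant time --- maintaining the set of currently foldable edges and updating it after each merge --- but this is exactly what the procedure of Section~\ref{sec:cycle-folding} supplies, so the overall linear bound follows.
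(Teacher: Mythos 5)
Your overall plan coincides with the paper's proof of this theorem: decompose $G$ into its faces, run the crimping procedure of Section~\ref{sec:cycle-folding} on each, and charge the work to face sizes for a linear total. The explicit $O(n)$ consistency pre-check is a reasonable addition (Theorem~\ref{thm:local} is stated only under consistency, and consistency is obviously necessary), and the running-time accounting via $2|E|=O(n)$ is correct. However, there is one concrete gap: you treat every face of $G$ as a simple cycle, and this fails whenever $G$ is not $2$-vertex-connected. The theorem is claimed for arbitrary plane graphs, including trees, and for a tree the unique face is a closed walk that traverses every edge twice and visits each vertex $v$ a number of times equal to its degree; more generally any bridge or articulation vertex appears with multiplicity on its face. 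Both the crimping algorithm and the face condition of Theorem~\ref{thm:local} are formulated for simple cycles, so ``extract each face cycle and run crimping'' is not yet a well-defined step on such inputs. Your own parenthetical ``counted with multiplicity'' shows you noticed that an edge can contribute twice to a single face, but the proposal never says what running crimping on a non-simple boundary walk means, nor why its answer decides foldability of that face.

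The paper resolves exactly this point by an Euler tour technique: double each edge (one copy per side), link the copies around each face, and split repeated vertices into independent copies, turning each face into a genuine simple cycle with inherited lengths and angles. The nontrivial part, proved in an appendix, is that flat-foldability of these Euler-tour cycles is \emph{equivalent} to flat-foldability of the faces of $G$ --- not automatic, since the several occurrences of a cut vertex on one face carry angle constraints coupled through the original vertex. The paper's proof replaces each edge of $G$ by a uniquely flat-foldable $4$- or $6$-cycle gadget, obtaining a $2$-connected graph $H$ whose faces are precisely the Euler tours of the faces of $G$ together with the gadget cycles, and then applies Theorem~\ref{thm:local} to $H$. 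As written, your argument is complete for $2$-vertex-connected plane graphs but needs this reduction (or an equivalent one) to cover the general case.
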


\begin{proof}
We partition the graph into its component faces, and apply the crimping algorithm to an Euler tour of each face. Each face takes time proportional to its size, so the total time 
is linear. The correctness of forming simple cycles from each face by taking Euler tours follows from \autoref{lem:eulerize}.
\end{proof}

\section{Counting Flat Foldings}

We cannot use crimping to count the flat foldings of a cycle, because some flat foldings cannot be formed by a sequence of crimping steps; \autoref{fig:uncrimpable} depicts an example. Additionally, the crimping procedure that we use finds only a single folding, and it would not be clear how to modify it even to count all the foldings that arise from repeated crimping. 
Instead, to count flat foldings in a single cycle, we use dynamic programming. 

\begin{figure}[htb]
\centering\includegraphics[width=4in]{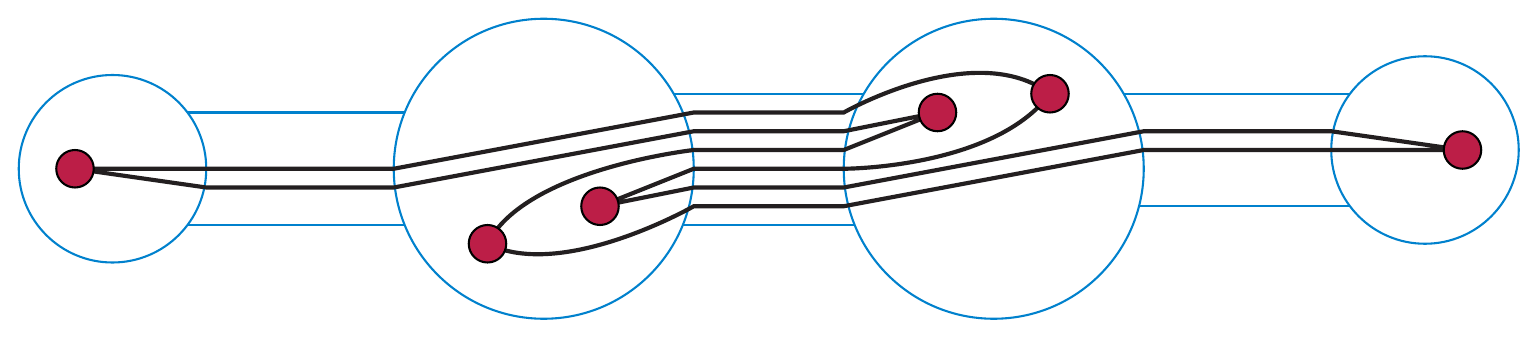}
\caption{Magnified view of a flat folding that cannot be obtained by crimping.}
\label{fig:uncrimpable}
\end{figure}

As a preliminary result for our counting algorithm, we prove that the number of flat foldings of the interior of a simple cycle (with assigned lengths and angles) may be counted by dynamic programming in $\tilde O(n^4)$ time. Here the $\tilde O$ notation means that we omit logarithmic factors in the time bound, which in our case come from the nonlinear complexity of multiprecision integer multiplication. More precisely, let $M(n)=\tilde O(n)$ denote the time to multiply two $O(n)$-bit binary numbers.  Then our time bound is $O(n^3 M(n))=\tilde O(n^4)$.

\begin{lemma}
\label{lem:num-bits}
Given a single $n$-vertex cycle, with an assignment of edge lengths and angles, the number of flat foldings of the interior of the cycle can be counted by a number whose binary representation has $O(n)$ bits.
\end{lemma}

\begin{proof}
A flat folding is determined by the pairs of edges that have a nonzero length directly visible to each other. To represent this information, create an auxiliary graph that has $n$ vertices, one at each midpoint of one of the $n$ cycle edges. Connect two of these vertices by an edge in the auxiliary graph when they correspond to two cycle edges that are visible to each other. This auxiliary graph can be drawn without crossings within the magnified view of the flat folding, by routing each edge of the auxiliary graph along each of the two cycle edges that it connects, to the point where the two edges are directly above each other. Therefore, it forms an outerplanar graph on its $n$ vertices.

We can bound the number of outerplanar graphs on these vertices, with their fixed cyclic ordering,
by multiplying the number of maximal outerplanar graphs on the same vertex set (a Catalan number) by the number of different subsets of edges that each of these maximal outerplanar graphs has ($2^{2n-3}$). The result follows from the fact that this bound is single-exponential in~$n$. For related results on the enumeration of outerplanar graphs, see Bodirsky et al.~\cite{BodFus-EJC-07} and their references.
\end{proof}

\begin{lemma}
\label{lem:count-cycle}
Given a single $n$-vertex cycle, with an assignment of edge lengths and angles, it is possible to count the flat foldings of the interior of the cycle in time $O(n^3 M(n))=\tilde O(n^4)$.
\end{lemma}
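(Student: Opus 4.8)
The plan is to set up a sweep-line dynamic program along the line $H$ and to exploit the non-crossing structure of a flat folding to keep the number of subproblems polynomial. First I would fix an $x$-coordinate for one vertex and an orientation for one edge, which by the discussion in Section~\ref{sec:related} determines the $x$-coordinate of every vertex, and then subdivide exactly as in the proof of Theorem~\ref{thm:local} so that every edge fragment spans a single column between two consecutive coordinates. After subdivision the cycle decomposes into \emph{monotone runs} (maximal pieces of the traversal on which the $x$-coordinate is monotone), each occupying an interval of columns, and the turning points between runs are precisely the mountain and valley folds, which I will call \emph{cusps}. In this language a flat folding is nothing more than a choice, at every column, of a vertical order on the runs that cross that column.

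The central observation I would build on is that vertical order is preserved across columns: since the straight segments of $H$ cannot cross, two runs that both span a pair of adjacent columns must keep the same above/below relationship there. Consequently a folding is equivalent to choosing, for each pair of $x$-overlapping runs, which one lies above, subject to (i) transitivity within each column, and (ii) at every cusp, the two runs that meet must be vertically adjacent, with the incident face on the side dictated by whether the cusp is a mountain or a valley. Counting flat foldings therefore reduces to counting the consistent, non-crossing orderings compatible with these local adjacency constraints.

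To carry out the count I would run an interval dynamic program. Because the folded image is planar and non-self-crossing, the pockets created by the cusps form a laminar (nested) family, so the ordering problem on a range of columns $[i,j]$ splits into independent subproblems on the pockets it contains. A subproblem $C[i,j]$ would count the non-crossing orderings of the runs confined to $[i,j]$, given the numbers of runs that must be threaded through each boundary, and adjacent ranges would be merged by convolving over the split column and over the ways the boundary layers of the two halves interleave; this is exactly where the fold types reappear as adjacency restrictions. There are $O(n^2)$ ranges, and each merge sums over $O(n)$ split columns and $O(n)$ interleavings; since the counts are integers with $\Theta(n\log n)$ bits, each arithmetic operation costs $\tilde O(n)$, and multiplying these factors yields the claimed $\tilde O(n^5)$ bound.

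The step I expect to be hardest, and the one deserving the most care, is proving that the \emph{global} requirement of being a genuine non-crossing flat folding is equivalent to the \emph{conjunction of local} constraints that the dynamic program enforces---that is, that no long-range obstruction can prevent a locally consistent assignment of orderings from assembling into a valid folding---together with showing that the count genuinely factorizes over the laminar family, so that the merge really is a clean convolution. Handling the mountain-versus-valley distinction correctly (which side of a fold a pocket lies on, and hence which runs are permitted to nest inside it) and resolving ties when several vertices share an $x$-coordinate are the places where this equivalence is most delicate. Figure~\ref{fig:uncrimpable} is exactly the warning that a greedy crimping argument cannot substitute for this decomposition, since some orderings are unreachable by crimping, so establishing the laminar factorization rigorously is essential rather than a convenience.
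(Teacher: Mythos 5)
Your proposal heads in a recognizably similar direction (coordinate assignment, subdivision as in Theorem~\ref{thm:local}, a polynomial DP exploiting non-crossing structure), but it has a genuine gap, and it sits exactly where you admit the difficulty lies: the equivalence between your local ordering constraints and global flat-foldability is asserted, not established, and as stated your constraints are in fact wrong. At a cusp the two incident runs need \emph{not} be vertically adjacent: a mountain fold of another part of the cycle, at the very same $x$-coordinate, can nest inside the pocket. This equal-coordinate nesting is not a tie-breaking nuisance to be ``resolved'' later---it is the central combinatorial phenomenon the count must organize, and it is precisely what Figure~\ref{fig:uncrimpable} illustrates. The paper handles it by first reusing the machinery of Theorem~\ref{thm:local} to prove a structural equivalence you lack: after subdivision, a flat folding of the cycle is exactly a non-crossing set of identifications of same-coordinate vertex pairs that partitions the polygon into sub-polygons with two valley folds and no mountain folds. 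With that in hand, the paper's DP is over ordered same-coordinate pairs $(u,v)$, with \emph{two} tables: $A(u,v)$ counting all foldings of the sub-polygon closed off by identifying $u$ with $v$, and $B(u,v)$ counting only those in which the two boundary edges see each other, linked by $A(u,v)=B(u,v)+\sum_w A(u,w)B(w,v)$ over same-coordinate mountain vertices $w$. The $A$/$B$ split is exactly the bookkeeping device for equal-coordinate nesting that your single adjacency condition cannot express.

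Your complexity accounting also does not survive scrutiny as specified. Parameterizing $C[i,j]$ by ``the numbers of runs that must be threaded through each boundary'' is insufficient state: which runs cross the boundary, and how they connect to each other outside $[i,j]$, determines which orderings are completable, so counts alone do not yield a well-defined subproblem. Worse, your merge ``convolving over the ways the boundary layers of the two halves interleave'' is claimed to cost $O(n)$, but two boundaries carrying $a$ and $b$ layers admit $\binom{a+b}{a}$ interleavings; this is exponential unless the laminar factorization is already proven, which is the very step you defer. The paper avoids the interface problem entirely: its subproblems are indexed by vertex pairs ($O(n^3)$ of them, since subdivision yields $O(n)$ coordinates with $O(n)$ vertices each), each recurrence sums over $O(n)$ choices of $w$ (quadratically many choices only in the $O(n^2)$ cases where both neighboring vertices $u'$, $v'$ are mountain folds), and with $\tilde O(n)$-time arithmetic on $O(n)$-bit counts this gives $\tilde O(n^5)$ with no interleaving enumeration at all. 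To repair your route you would need to prove both the local-to-global equivalence (including equal-coordinate taco-taco--style nesting) and the clean factorization over pockets---at which point you would essentially have reconstructed the identification characterization and the pair-indexed DP of the paper's proof.
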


\begin{proof}
We use the given assignment to compute $x$-coordinates for all vertices, and subdivide the cycle as in the proof of \autoref{thm:local}. After this step, no mountain folded vertex is interior to the range of $x$-coordinates of another edge, and  each input edge is subdivided into a path of at least two edges.
By the argument used to prove \autoref{thm:local}, once this subdivision is fixed, a flat-folding of the cycle is equivalent
to a partition of the edges of the subdivided cycle into touching pairs (with both edges in each touching pair having the same endpoint coordinates) such that any two touching pairs are disjoint or nested.

To count sets of touching pairs of this type, we use a dynamic programming algorithm that counts, for each ordered pair $u,v$ of vertices with the same $x$-coordinate, 
the numbers $A(u,v)$ and $B(u,v)$ of flat-foldings (if any) of the oriented polygon formed by identifying $u$ with $v$ and keeping the part of the input from $u$ clockwise to $v$. Here we only consider pairs such that the edge $e$ immediately clockwise from $u$ and the edge $f$ immediately counterclockwise from $v$ both extend in the same direction, left or right, and such that the identification of $u$ with $v$ produces a valley fold rather than a mountain fold. $A(u,v)$ will count the total number of flat-foldings of this polygon, while $B(u,v)$ will count only a subset of the flat-foldings, the ones in which edges $e$ and $f$ are visible to each other rather than being blocked by a mountain-fold vertex with the same $x$-coordinate as both $u$ and $v$. By symmetry we can assume without loss of generality that edges $e$ and $f$ both lie to the left of $u$ and $v$, so in any flat-folded state $u$ and $e$ must be below $v$ and $f$.
See \autoref{fig:dyn-prog-1}.

\begin{figure}[htb]
\vspace{-2ex}
\centering
\includegraphics[width=5.25in]{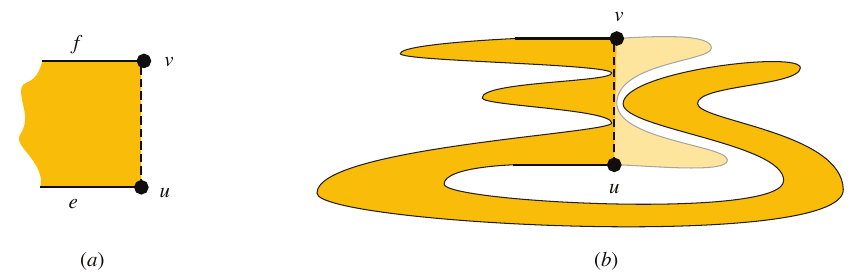}
\caption{(\emph{a}) The dynamic programming subproblem for $u$ and $v$.  (\emph{b}) A possible solution to the subproblem, and its place in the whole solution (light colour).}
\label{fig:dyn-prog-1}
\end{figure}

\begin{figure}[b]
\vspace{-1ex}
\centering
\includegraphics[width=4.25in]{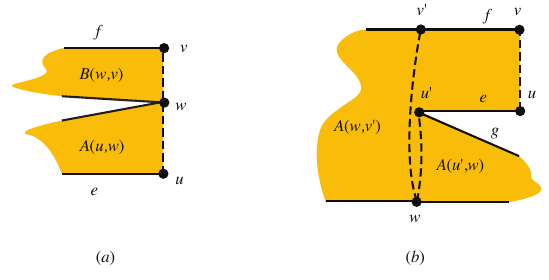}
\caption{(\emph{a}) Computing $A(u,v)$ by trying all possible choices for $w$.  
(\emph{b}) Computing $B(u,v)$ in case $u'$ is a mountain but $v'$ is not.}
\label{fig:dyn-prog-2}
\end{figure}

To compute $A(u,v)$, we sum $B(u,v)$ together with the number of folds in which a mountain vertex $w$ blocks $e$ from $f$. We will check all possible choices of $w$; to avoid double-counting, we will only consider folds where $w$ is the topmost such mountain vertex (the one closest to $v$). Thus, for each choice of $w$, we include in our sum the term $A(u,w)B(w,v)$ counting the number of folds in the two sub-polygons between $u$ and $w$ and between $w$ and $v$. That is, we may calculate $A(u,v)$ by the formula
\[
A(u,v)=B(u,v)+\sum_w A(u,w)B(w,v)
\]
where the sum is over the possible choices of mountain vertices that are between $u$ and $v$ in the cyclic ordering of the cycle, have their two adjacent edges to the left, and have the same $x$-coordinate as $u$ and $v$.
See \autoref{fig:dyn-prog-2}$(a)$.

To compute $B(u,v)$, we consider the two vertices $u'$ and $v'$ at the other ends of the incident edges $e$ and $f$. Because of the way we subdivided the input, $u'$ and $v'$ again have the same $x$-coordinate. 
If $u'=v'$, we are done and there is exactly one flat-folded state: $B(u,v)=1$. If either $u'$ or $v'$ is a valley fold, there can be no valid flat-folded state of the polygon between $u$ and $v$, and again we are done: $B(u,v)=0$. And if $u'$ and $v'$ are both flat vertices, then we can copy the value from a simpler subproblem: $B(u,v)=A(u',v')$.

In the remaining cases, one or both of $u'$ and $v'$ is a mountain fold. Consider the case in which $u'$ is a mountain fold and $v'$ is not; the other cases are similar. 
See \autoref{fig:dyn-prog-2}$(b)$.
Let $g$ be the other edge incident to $u'$. Because $u'$ is a mountain fold, edge $g$ lies to the right of $u'$.  In any flat-folded state, $u'$ must see another non-mountain vertex $w$ below it, from the sub-polygon between $u$ and $v$ (ignoring intervening mountains). Identifying $u'$ and $w$ splits the polygon from $u$ to $v$ into two parts, one to the right of the identified vertex and one to the left. The right sub-polygon can be described by a subproblem from $u'$ to $w$, and the left sub-polygon can be described by a subproblem from $w$ to $v'$. Thus, in this case, we have the formula
\[
B(u,v)=\sum_w A(u',w)A(w,v')
\]
where again the sum is over all suitable choices of $w$.

In the case where both $u'$ and $v'$ are mountain folds, we have a similar sum over two choices of intermediate vertex, one below $u'$ and one above $v'$, forming two right-pointing subproblems with one left-pointing subproblem between them:
\[
B(u,v)=\sum_w \sum_x A(u',w) A(w,x) A(x,v').
\]
However, we may avoid summing over all pairs of choices of $w$ and $x$ by factoring this expression. To do so, define
\[
C(u',x)=\sum_w A(u',w)A(w,x).
\]
Then we may rewrite the expression for $B(u,v)$ (in the case of two mountain folds) as
\[
B(u,v)=\sum_x C(u',x)A(x,v').
\]

The number of flat-foldings of a whole polygon, when it is an interior face, can be found as $A(v,v)$ where $v$ is a rightmost vertex of the polygon (one that maximizes its $x$-coordinate). For the exterior face, we may reduce to the case of an interior face by letting $v$ be a rightmost vertex of the polygon, placing a new artificial vertex $u$ to the left of all vertices of the polygon (that is, with smaller $x$-coordinate than any vertex), breaking the polygon at $v$ (splitting it into two copies $v$ and $v'$), and rejoining it by two edges $v$--$u$--$v'$ that, in any flat-folding, necessarily surround the rest of the polygon. For this augmented polygon, we can compute the number of flat-foldings as $A(v,v')$.

Finally, we analyze the running time.  The number of subproblems, i.e., the number of pairs $(u,v)$, is $O(n^3)$ because there are $O(n)$ coordinates each with $O(n)$ vertices (after subdividing edges).
The number of multiplications performed in the calculation of $A(u,v)$ can be bounded as $O(n^2)$ choices of the pair $(u,v)$ for each mountain fold $w$, for a total of $O(n^3)$ multiplications.
each subproblem involves a summation with a linear number of choices. The calculation of $B(u,v)$ involves a constant or linear number of choices depending on whether one or both of $u'$ or $v'$ is a mountain fold; the number of cases with a linear number of choices, in which at least one is a mountain fold, is $O(n^2)$. The arithmetic operations in these summations involve numbers with a linear number of bits (\autoref{lem:num-bits}). Therefore, the total time is $O(n^3 M(n))=\tilde O(n^4)$.
\end{proof}

\begin{theorem}
We can count the flat foldings of a connected $n$-vertex planar graph $G$ with an assignment of edge lengths and angles in time $O(n^3 M(n))=\tilde O(n^4)$.
\end{theorem}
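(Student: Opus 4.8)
The plan is to combine the local characterization of Theorem~\ref{thm:local} with the cycle-counting result of Lemma~\ref{lem:count-cycle}. The key structural fact is that Theorem~\ref{thm:local} does not merely say that $G$ is flat-foldable iff each face is, but gives the much stronger statement that \emph{every} combination of flat foldings of the individual faces extends to a flat folding of $G$, with the touching pairs of each face preserved. This means the flat foldings of $G$ are in bijection with tuples consisting of one flat folding chosen independently for each face. Consequently the count for $G$ factors as a product over the faces, and it suffices to count the flat foldings of each face cycle separately and multiply the results.

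Concretely, I would first argue that two flat foldings of $G$ are equivalent (in the sense of the paper's equivalence relation on foldings) if and only if they induce the same face ffoldings on every face. The forward direction is immediate since restricting a folding of $G$ to a face yields a well-defined folding of that face. For the reverse direction, Theorem~\ref{thm:local} supplies, for each combination of face foldings, a flat folding of $G$ realizing exactly those touching pairs; one must check that this assignment is well-defined up to equivalence, i.e.\ that distinct combinations of face foldings cannot yield the same folding of $G$ and that each combination yields an essentially unique extension. Since the full set of touching pairs of $G$ is partitioned among its faces (each touching pair is assigned to the single face lying between the two edges, as set up in the Definitions section), the touching pairs of $G$ are determined by, and in turn determine, the per-face foldings. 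Hence the map from foldings of $G$ to tuples of face foldings is a bijection.

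Given this bijection, the count is $\prod_f c_f$, where $c_f$ is the number of flat foldings of face cycle $f$. To bound the running time, I would note that each face cycle has some number of vertices $n_f$, and by Lemma~\ref{lem:count-cycle} its foldings can be counted in time $\tilde O(n_f^5)$. Summing over all faces, and using that the total size of all face cycles is $O(n)$ in a plane graph (each edge bounds at most two faces, so $\sum_f n_f = O(n)$), the total time is $\sum_f \tilde O(n_f^5) \le \tilde O\bigl((\sum_f n_f)^5\bigr) = \tilde O(n^5)$. Computing the product of the per-face counts adds only a polynomial overhead in the number of faces and the bit-length of the counts, which is dominated by $\tilde O(n^5)$.

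The main obstacle I anticipate is verifying the clean bijective correspondence between foldings of $G$ and tuples of face foldings, since one must be careful about the subtle equivalence relation (foldings are identified up to global reversal) and about the subdivision operations used inside the proof of Theorem~\ref{thm:local}, which change the set of touching pairs but are shown there to preserve the count. I would rely on the theorem's explicit guarantee that touching pairs of each face are exactly preserved, which is precisely what makes the independence of the face choices rigorous. Once that independence is established, the counting and the time bound are routine, so the crux is entirely in translating the ``more strongly'' clause of Theorem~\ref{thm:local} into a product formula.
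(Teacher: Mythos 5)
Your overall route is the same as the paper's: invoke the ``more strongly'' clause of Theorem~\ref{thm:local} to make the per-face choices independent, count each face with Lemma~\ref{lem:count-cycle}, and return the product, with the running time bounded via $\sum_f n_f = O(n)$. However, there is one genuine gap: you apply Lemma~\ref{lem:count-cycle} directly to each ``face cycle,'' which tacitly assumes that every face boundary of $G$ is a simple cycle. That holds only when $G$ is $2$-vertex-connected. In general a face is a closed walk that may repeat articulation vertices and bridge edges---in the extreme case of a tree, $G$ has a single face in which every edge appears twice---and Lemma~\ref{lem:count-cycle} is stated and proved only for simple cycles, so your proof does not cover the general planar input promised by the theorem. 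The paper's one-line proof handles this by applying the lemma to the \emph{Euler tour} of each face, and Appendix~\ref{sec:euler} justifies that step: each edge of $G$ is expanded into a uniquely flat-foldable $4$- or $6$-cycle, producing a $2$-vertex-connected graph whose faces are exactly the Euler tours of the faces of $G$ (plus the uniquely foldable gadget cycles), and Theorem~\ref{thm:local} applied to the expanded graph shows that the folding count of $G$ equals the product over these simple cycles of their internal-visibility counts. Some equivalent handling of nonsimple faces is indispensable in your argument.

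Your attention to the bijection between foldings of $G$ and tuples of face foldings is well placed, and your resolution---the touching pairs of $G$ are partitioned among its faces, so the per-face patterns determine and are determined by the global folding---is essentially the justification the paper leaves implicit; note that the appendix lemma is careful to phrase the per-face quantity as the number of ``patterns of internal visibility,'' which is also what the proof of Lemma~\ref{lem:count-cycle} actually counts (foldings of the \emph{interior} of the cycle), so the subtlety you flag about exterior structure and global reversal is already absorbed there. Apart from the nonsimple-face issue, your argument and time analysis are sound and match the paper's.
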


\begin{proof}
We apply \autoref{lem:count-cycle} to the Euler tour of each face of $G$ and return the product of the resulting numbers. Because the number of bits needed to represent the number of foldings of each face is proportional to the complexity of the face, the number of bits needed to represent the number of foldings of the whole graph is proportional to the sum of the complexities of the faces, which is $O(n)$. The time to calculate the product is dominated by the time to perform the dynamic programming algorithm within each face, which is maximized when a constant number of faces have linear complexity, giving the time bound claimed.
\end{proof}

\textbf{Acknowledgements.}
This research was performed in part at the 29th Bellairs Winter Workshop on Computational Geometry. A preliminary version of these results were presented at the 22nd International Symposium on Graph Drawing~\cite{AbeDemDem-GD-14}.
Erik Demaine thanks Ilya Baran and Muriel Dulieu, and the authors of
\cite{AbeDemDem-IJCGA-13}, for many discussions attempting to solve
this problem. We also thank Jason Ku for helpful comments on a draft of this paper.
Erik Demaine was supported in part by NSF ODISSEI grant EFRI-1240383 and NSF Expedition grant CCF-1138967.
David Eppstein was supported in part by NSF grants  CCF-1228639, CCF-1618301,
and CCF-1616248 and by ONR grant N00014-08-1-1015.
Anna Lubiw was supported in part by the Natural Sciences and Engineering Research Council of Canada (NSERC).

\raggedright
\bibliographystyle{abuser}
\bibliography{folding}

\end{document}